\newcommand{\pcite}[1]{\citeauthor{#1}'s \citeyearpar{#1}}
\newtheorem{theorem}{Theorem}
\newtheorem{corollary}{Corollary}
\theoremstyle{remark}
\newtheorem{remark}{Remark}
\newtheorem{example}{Example}
\newcommand{\sX}{\mathsf{X}}
\newfont{\msbm}{msbm10 at 11pt}
\newcommand {\R} {\mathbb{R}}
\newcommand {\N} {\mbox{\msbm N}}
\newcommand{\Cov}[0]{\text{Cov}}
\newcommand{\Var}[0]{\text{Var}}
\begin{document}
\onehalfspacing

\title{Assessing and Visualizing Simultaneous Simulation Error} 

\author{Nathan Robertson \\ Department of Statistics \\ University of California, Riverside \\ {\tt nathan.robertson@email.ucr.edu} \and James M. Flegal \\ Department of Statistics \\ University of California, Riverside \\ {\tt jflegal@ucr.edu} \and Dootika Vats \\ Department of Mathematics and Statistics \\ Indian Institute of Technology Kanpur \\ {\tt dootika@iitk.ac.in} \and Galin L. Jones\footnote{Research partially supported by the National Science Foundation.} \\ School of Statistics \\ University of Minnesota \\ {\tt galin@umn.edu} }   

\date{\today}

\maketitle

\begin{abstract} 
  Monte Carlo experiments produce samples in order to estimate
  features of a  given distribution.  However, simultaneous estimation
  of means and quantiles has received little attention, despite being
  common practice. In this setting we establish a multivariate central
  limit theorem for any finite combination of sample means and
  quantiles under the assumption of a strongly mixing process, which
  includes the standard Monte Carlo and Markov chain Monte Carlo
  settings.  We build on this to provide a fast algorithm for
  constructing hyperrectangular confidence regions having the desired
  simultaneous coverage probability and a convenient marginal
  interpretation. The methods are incorporated into standard ways
  of visualizing the results of Monte Carlo experiments enabling the
  practitioner to more easily assess the reliability of the
  results. We demonstrate the utility of this approach in various
  Monte Carlo settings including simulation studies based on
  independent and identically distributed samples and Bayesian
  analyses using Markov chain Monte Carlo sampling.

\smallskip
\noindent \textbf{Keywords.} Asymptotic normality, Monte Carlo, Markov chain Monte Carlo, quantile limit theorems, simulation studies, strong mixing, visualizations.
\end{abstract}

\section{Introduction} \label{sec:intro}

Monte Carlo experiments are a standard tool to estimate features of a
given distribution \citep[see e.g.][]{broo:gelm:jone:meng:2010,
  fish:1996}.  The justification for using Monte Carlo methods is
asymptotic. For example, using a sample mean to estimate a mean of the
distribution is justified by an appropriate strong law.  However, due
to variability arising from simulation in finite time, it is essential
to address whether the simulation has been run sufficiently long so
that the resulting estimates are reliable.

Reporting of results from simulation studies often focuses on
estimates from empirical marginal distributions, e.g., by using sample
boxplots of estimates of one feature of interest.  Popular software
such as \texttt{Stan} \citep{R:rstan} and \texttt{tidybayes}
\citep{R:tidybayes} follow similar procedures for reporting credible
intervals or prediction intervals.  However, consideration of simulation
reliability is rare; see \cite{fleg:hara:jone:2008},
\cite{jone:hara:caff:neat:2006}, and \cite{koeh:brow:hane:2009} for
more discussion.  Despite these works from more than a decade ago, the
situation has not substantially improved.

When simulation reliability is addressed, it usually amounts to either
considering the Monte Carlo sample size (or effective sample size) or,
in the best case, univariate Monte Carlo standard errors.  In Markov
chain Monte Carlo (MCMC) settings, so-called convergence diagnostics
are often reported.  Neither the Monte Carlo sample size (or effective
sample size) nor convergence diagnostics directly assess the
reliability of the resulting estimation.  Univariate Monte Carlo
standard errors ignore the fact that most simulation experiments are
aimed at estimating multiple quantities with the same simulated data,
forcing the use of conservative multiplicity adjustments such as
Bonferroni or Scheff{\' e}, leading to inefficient simulation
practices.

Although there has been recent work on assessing reliability through Monte Carlo error in simultaneous estimation of several expectations in MCMC experiments \citep[see e.g.][]{dai:jones:2017, vats:fleg:jone:2018, vats:fleg:jone:2019}, current best practice does not apply to simultaneous estimation of several quantiles or combinations of means and quantiles, a common goal (e.g.\ in Bayesian applications). Moreover, current multivariate approaches focus on either a relative error interpretation (comparing the Monte Carlo variation to the target distribution variation) or on minimum volume ellipsoidal confidence regions.  The relative error approach applies well in MCMC experiments, but does not translate to standard Monte Carlo, that is, independent and identically distributed (IID) sampling.  The use of ellipsoidal regions makes visualization difficult and complicates the interpretation of marginal estimates.

We aim to address the reliability of simultaneous Monte Carlo estimation of any finite combination of means and quantiles under the assumption that the simulated process is strongly mixing, which includes standard Monte Carlo and correlated sampling from MCMC.  Our approach will also yield a convenient marginal interpretation that, as we will demonstrate, can be easily incorporated into standard existing visualizations of Monte Carlo estimates, making the assessment of estimation reliability convenient to the practitioner.

The techniques we consider provide simultaneous confidence intervals
for any combination of quantiles and expectations with the desired
level, say $1 - \alpha$. Thus we avoid the use of conservative methods
where the overall coverage probability often greatly exceeds
$1-\alpha$.  Our approach begins by establishing asymptotic
(multivariate) normality of any finite collection of sample means and
quantiles.  The covariance matrix of the asymptotic distribution has a
complicated form, but we provide consistent estimators for it. Instead
of proceeding down the well-trodden route of using conservative
hyperrectangular approaches or minimum volume ellipsoidal regions, we
develop a fast algorithm for constructing hyperrectangular regions
having the desired coverage, $1-\alpha$ \citep[cf.][]{wasserman:2010,
  montiel2019simultaneous}.  Using a hyperrectangular approach makes
the marginal interpretation of the resulting intervals
straightforward, making it easier to incorporate in the standard
plots given by existing software.  The following example illustrates
some of these properties.

\begin{example}
  Consider a motivating example with the goal of estimating the mean
  and $(.10,.90)$-quantiles for a three component mixture of normal
  densities. We use a Metropolis-Hastings (MH) random walk to produce
  1000 MCMC samples.  The sample statistics of these draws are
  used to estimate the 3-dimensional quantity of interest and 
  corresponding $3 \times 3$ asymptotic covariance
  matrix. Figure~\ref{fig:mix.intro} shows $90\%$ simultaneous
  confidence intervals superimposed on a plot containing an empirical
  density estimate; notice that the confidence regions displayed
  around each quantity of interest have different lengths enabling
  practitioners to visualize simultaneous simulation variability and
  assess the reliability of the simulation experiment.  It clearly
  illustrates both the variability of the target distribution and the
  variability in Monte Carlo estimation without overemphasizing point
  estimates.  This concludes our initial discussion of this example.

\begin{figure}[tbh]
\begin{center}
\includegraphics[]{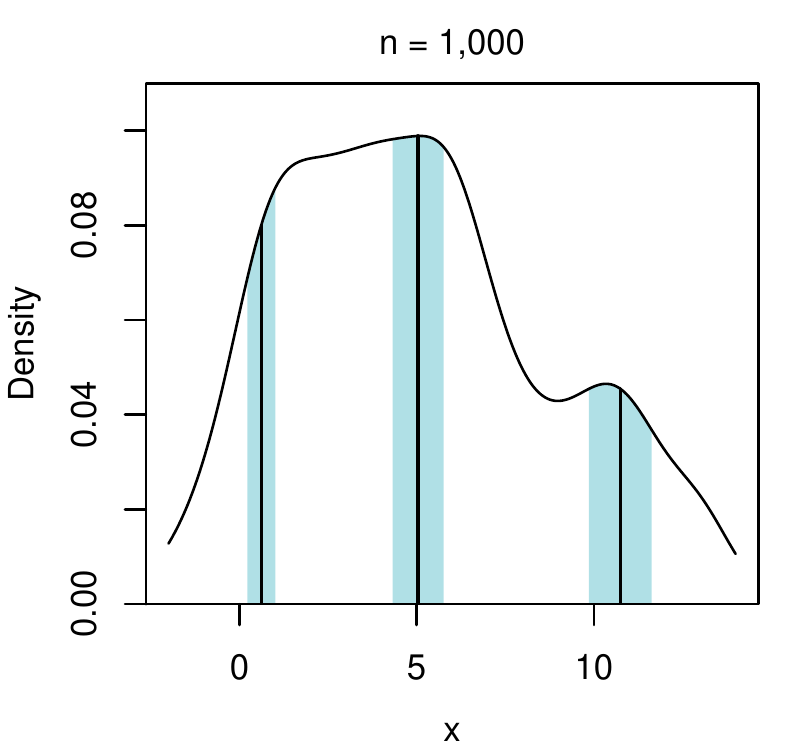}
\caption{Simultaneous confidence intervals for the mean (middle vertical band) and
  $(.10,.90)$-quantiles (outer two vertical bands) based on a MCMC sample
  size of 1000. Notice the substantial simulation variability and that
  the intervals are of different widths.}
\label{fig:mix.intro}
\end{center}
\end{figure}

\end{example}

The remainder is organized as follows. In Section~\ref{sec:joint_clt},
we establish the joint asymptotic distribution of sample means and
quantiles for strongly mixing processes. Section~\ref{sec:simul_CI}
develops a univariate optimization method for obtaining the
simultaneous confidence intervals described above.  Several examples
are given in Section~\ref{sec:examples}, which demonstrate the
versatility of the approach given here and its potential application
in existing software.  More specifically, we return to the three
component mixture of normals considered above, then turn our attention
to two Monte Carlo simulation studies, and finally consider a Bayesian
analysis example.  In one of the simulation studies we investigate the
empirical coverage properties of the methods we propose.  Some final
remarks are given in Section~\ref{sec:disc}.

\section{Joint asymptotic distribution}
\label{sec:joint_clt}

Suppose $\pi$ is a probability distribution with support $\mathsf{X} \subseteq \mathbb{R}^d, d \geq 1$. Let $p_1$ and $p_2$ be nonnegative integers such that $p_1 + p_2 \ge 1$.  Let $g: \mathsf{X} \to \mathbb{R}^{p_1}$ such that $g=(g_1,\ldots, g_{p_{1}})'$ and, if $X\sim \pi$, then set
\begin{equation*}
\theta_{g_{i}}= E_\pi \left( g_{i}(X) \right) = \int_{\mathsf{X}} g_{i}(x)\pi(dx), 
\end{equation*}
which is assumed to exist.    Collecting the expectations, set
\[
\theta_{g}=(\theta_{g_{1}},\ldots, \theta_{g_{p_{1}}})'.
\]
Let $h:\mathsf{X} \to \mathbb{R}^{p_2}$ such that $h =  \left( h_1, \ldots, h_{p_2} \right)'$. If $V = h_{i}(X)$ with distribution function $F_{h_i}(v)$, which we assume is absolutely continuous with a continuous density $f_{h_i}(v)$, then define the $q_i$-quantile associated with $F_{h_i}$ as
\begin{equation*}
\xi_{q_i} = F_{h_i}^{-1}(q_i) = \inf\{ v:F_{h_i}(v)\geq q_i\}.
\end{equation*}
Collecting the quantiles, set 
\[
\phi_{h} = \left( \xi_{q_1} , \dots , \xi_{q_{p_2}} \right)'.
\]
Then the $(p_{1}+p_{2})$-vector $\nu = (\theta_{g}', \phi_{h}')'$ is the set of features of $\pi$ that we aim to estimate.  

Let $\{X_{n}\}$ be a strictly stationary stochastic process on a
probability space $(\Omega, {\mathcal F}, P)$ and set
${\mathcal F}_{k}^{l} = \sigma(X_{k}, \ldots, X_{l})$, the $\sigma$-algebra generated by $X_{k}, \ldots, X_{l}$.  Define the
$\alpha$-mixing coefficients for $n=1,2,3,\ldots$ as
\[ 
\alpha(n) = \sup_{k \ge 1} \sup_{A \in {\mathcal F}_{1}^{k}, \, B \in {\mathcal F}_{k+n}^{\infty}} | P(A \cap B) - P(A) P(B) | \; .  
\]
Then $\{X_{n}\}$ is \textit{strongly mixing} if $\alpha(n) \to 0$ as
$n \to \infty$ and this will be assumed for the remainder.
Essentially, strong mixing coefficients quantify the rate at which
events in the distant future become independent of the past. Both IID
processes and positive Harris recurrent Markov chains \citep[for
definitions see][]{meyn:twee:1993} are strongly mixing; see
\cite{brad:1986, brad:2005} and \cite{ibra:linn:1971}.

Estimation of $\theta_g$  is straightforward with a vector of sample means  since there is a version of the strong law \citep[see, e.g.,][]{blum:hans:1960, ibra:linn:1971} that ensures, as $n \to \infty$, with probability 1,
\begin{equation*}
\label{eq:SLLN}
\bar{g}_n =\frac{1}{n} \sum_{j=1}^n g(X_j) \to \theta_g .
\end{equation*}
Similarly, estimation of $\phi_{h}$ is straightforward using sample order statistics.  That is, let $\hat{\xi}_{q_i} = h_{i}(X)_{\lceil nq_{i}\rceil:n}$ be the $\lceil nq_{i}\rceil^{th}$ order statistic of $h_{i}(X)$ and denote the vector of $p_2$ estimated quantiles as
\begin{equation*}
\hat{\phi}_n = \left( \hat{\xi}_{q_1} , \dots , \hat{\xi}_{q_{p_2}} \right)'.
\end{equation*}
Define empirical distributions for $F_{h_i}$ as
\[
\bar{F}_{h_i} (v) =\frac{1}{n}\sum_{j=1}^n I(h_{i}(X_j)\leq v),
\]
and
\[
\bar{F}_h ( v ) = \left( \bar{F}_{h_1} (v_1) , \dots, \bar{F}_{h_{p_2}} (v_{p_2}) \right)' .
\]
Letting $Q=(q_{1},\ldots, q_{p_{2}})'$ and noting that $\bar{F}_{h_i} (v)  $ is a sample mean we have that, as $n \to \infty$, $\bar{F}_h ( \phi_h ) \to Q$ with probability 1. This is sufficient to ensure that, as $n \to \infty$, $\hat{\phi}_n \to \phi_{h}$ with probability 1 \citep[for more discussion on this point see][]{doss:fleg:jone:neat:2014}.

While the above asymptotics justify $\hat{\nu}_{n} := (\bar{g}_{n}', \hat{\phi}_{n}')'$  as a simulation-based estimator of $\nu$, no matter how large $n$, there will be an unknown Monte Carlo error $(\hat{\nu}_{n} - \nu)$.
In the following theorem, we present the conditions for an asymptotic sampling distribution of the Monte Carlo error and the explicit form of the covariance matrix.
\begin{theorem} 
\label{thm:joint}
Let $\{X_n\}$ be a stationary $\alpha$-mixing process. Suppose
$F_{h_i}(v)$ is absolutely continuous and twice-differentiable with
density $f_{h_i}(v)$ such that $0 < f_{h_i} (\xi_{q_i}) < \infty$ and
the first derivative $f'_{h_i}(v)$ is bounded in some neighborhood of
$\xi_{q_i}$. In addition, suppose either
\begin{enumerate}[(a)]
	\item there exist $B$ such that $|h_i(x)| < B$ for all $i$ and
          $\alpha(n) = O(n^{-5/2 - \epsilon})$ for some $\epsilon > 0$\, or
	\item $\text{E}_{\pi}\|g\|^{2 + \delta} < \infty$ for some $\delta > 0$ and $\alpha(n) = O(n^{-\kappa})$ for $\kappa > \max\{3, (2 + \delta)/\delta \}$\,.
\end{enumerate}
Let $A_h$ be a $p_2 \times p_2$ diagonal matrix with $i^{th}$ diagonal elements $f_{h_i} (\xi_{q_i})$ and 
\begin{equation*}
\Lambda = \begin{pmatrix}
I_{p_1} & 0_{p_1 \times p_2} \\
0_{p_2 \times p_1} & A_h
\end{pmatrix} .
\end{equation*}
If $Y_{j} = \left( g(X_j) , I(h(X_j) > \phi_h) \right)'$ and
\begin{equation} \label{eq:mcmc.cov}
\Sigma = \Cov (Y_1,Y_1) + \sum_{j= 2}^\infty \left[ \Cov (Y_1,Y_{j}) +
  \Cov (Y_1,Y_{j})' \right] 
\end{equation}
is positive definite, then, as $n \to \infty$,
\begin{equation} \label{eq:CLT:final}
\sqrt{n} \begin{pmatrix} \bar{g}_n - \theta_g \\ \hat{\phi}_n - \phi \end{pmatrix} \overset{d}\to N \left( 0, \Lambda^{-1}\Sigma\Lambda^{-1} \right).
\end{equation}
\end{theorem}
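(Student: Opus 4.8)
The plan is to reduce \eqref{eq:CLT:final} to a central limit theorem for a single stationary strongly mixing sequence of random vectors by linearizing the sample quantiles. The sample-mean block is already a linear functional, $\bar g_n - \theta_g = n^{-1}\sum_{j=1}^n (g(X_j) - \theta_g)$. For the quantiles, since $\hat\xi_{q_i}$ is the $\lceil nq_i\rceil$-th order statistic one has $\bar F_{h_i}(\hat\xi_{q_i}) = q_i + O(1/n)$ deterministically (no ties, a.s., by absolute continuity of $F_{h_i}$); combining this with the assumed local smoothness of $F_{h_i}$ at $\xi_{q_i}$ (twice differentiable, $0 < f_{h_i}(\xi_{q_i}) < \infty$, $f_{h_i}'$ bounded near $\xi_{q_i}$) should yield the Bahadur-type representation
\begin{equation*}
\hat\xi_{q_i} - \xi_{q_i} = \frac{1}{f_{h_i}(\xi_{q_i})}\cdot\frac1n\sum_{j=1}^n\bigl[I(h_i(X_j) > \xi_{q_i}) - (1-q_i)\bigr] + R_{n,i},\qquad R_{n,i} = o_p(n^{-1/2}),
\end{equation*}
where the empirical-process term is written with the complementary indicator so its summands are centered, $E[I(h_i(X_1) > \xi_{q_i})] = 1 - F_{h_i}(\xi_{q_i}) = 1 - q_i$. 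This is precisely why the density matrix $A_h$ and the vector of complementary indicators appear in the statement.

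Establishing this representation is the heart of the argument and the step I expect to be the main obstacle, since for dependent data the oscillation control below is delicate (in contrast to the classical i.i.d.\ case). The standard route has three parts: (i) combine a uniform $O_p(n^{-1/2})$ bound on $\sup_v |\bar F_{h_i}(v) - F_{h_i}(v)|$ (a Donsker-type bound for strongly mixing empirical processes) with a one-term Taylor expansion of $F_{h_i}$ at $\xi_{q_i}$ and $f_{h_i}(\xi_{q_i}) > 0$ to obtain the preliminary rate $\hat\xi_{q_i} - \xi_{q_i} = O_p(n^{-1/2})$; (ii) bound the modulus of continuity of the centered empirical process, $\sup_{|v-\xi_{q_i}|\le c\,n^{-1/2}\log n}\bigl|(\bar F_{h_i} - F_{h_i})(v) - (\bar F_{h_i} - F_{h_i})(\xi_{q_i})\bigr| = o_p(n^{-1/2})$, using a moment or exponential inequality for strongly mixing partial sums (a blocking argument together with a Bernstein-type bound); and (iii) combine these with the Taylor expansion, using the bound on $f_{h_i}'$ to absorb the remaining quadratic term and $\bar F_{h_i}(\hat\xi_{q_i}) = q_i + O(1/n)$. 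The polynomial mixing rates in hypotheses (a) and (b) are calibrated precisely so that step (ii) yields an $o_p(n^{-1/2})$ oscillation; alternatively one may quote an existing Bahadur representation under $\alpha$-mixing \citep[cf.][]{doss:fleg:jone:neat:2014}.

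Granting the representation, stack the blocks. With $Y_j = (g(X_j), I(h(X_j) > \phi_h))'$ one has $E\,Y_1 = (\theta_g', (1-Q)')'$, and since $\Lambda$ is symmetric with $\Lambda^{-1} = \mathrm{diag}(I_{p_1}, A_h^{-1})$,
\begin{equation*}
\sqrt n\begin{pmatrix}\bar g_n - \theta_g\\ \hat\phi_n - \phi_h\end{pmatrix} = \Lambda^{-1}\cdot\frac{1}{\sqrt n}\sum_{j=1}^n\bigl(Y_j - E\,Y_1\bigr) + o_p(1).
\end{equation*}
The process $\{Y_j\}$ is a fixed measurable image of $\{X_j\}$, hence stationary with $\alpha$-mixing coefficients no larger than those of $\{X_j\}$; its indicator coordinates are bounded, and under either (a) or (b) the moment and mixing-rate conditions needed to apply a central limit theorem for strongly mixing random vectors to $\{Y_j\}$ are in force (reduce to scalars by the Cram\'er--Wold device and invoke, e.g., the mixing CLT of \citealp{ibra:linn:1971}), so that $n^{-1/2}\sum_{j=1}^n(Y_j - E\,Y_1)\overset{d}\to N(0,\Sigma)$ with $\Sigma$ as in \eqref{eq:mcmc.cov}; positive definiteness of $\Sigma$ makes the limit non-degenerate and the long-run covariance series well defined. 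Slutsky's theorem then gives $\sqrt n(\hat\nu_n - \nu)\overset{d}\to N(0,\Lambda^{-1}\Sigma(\Lambda^{-1})') = N(0,\Lambda^{-1}\Sigma\Lambda^{-1})$, the last equality because $\Lambda^{-1}$ is symmetric, which is \eqref{eq:CLT:final}.
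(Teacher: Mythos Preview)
Your proposal is correct and follows essentially the same route as the paper: a joint CLT for the stationary $\alpha$-mixing vector sequence $\{Y_j\}$ combined with a Bahadur representation for each sample quantile, then Slutsky. The only substantive difference is that, rather than sketching the oscillation argument for the Bahadur step, the paper simply invokes \cite{yosh:1995} under condition~(a) and \cite{zha:yang:hu:2014} under condition~(b), which are the references to which the specific mixing-rate thresholds $5/2+\epsilon$ and $\kappa>\max\{3,(2+\delta)/\delta\}$ are calibrated.
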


\begin{proof} 
Under condition (a) with $\alpha(n) = O(n^{-5/2 - \epsilon})$, $\sum_n \alpha(n) < \infty$ and under condition (b) with $\alpha = O(n^{-\kappa})$,  $\sum_n \alpha(n)^{\delta/(2 + \delta)} < \infty$ for $\delta > 0$. Define
\[
\Sigma_g = \Cov(g(X_1), g(X_1)) +  \sum_{j=2}^{\infty} \left[\Cov(g(X_1), g(X_j))  + \Cov(g(X_1), g(X_j))' \right]\,.
\]
Then $\Sigma_g$ is the top left $p_1 \times p_1$  principal sub-matrix of $\Sigma$, and thus is positive definite. By \cite{ibra:1962} and \cite{jone:2004}, as $n \to \infty$, 
\begin{equation*}
\label{eq:CLT} 
\sqrt{n}(\bar{g}_n - \theta_g) \overset{d} \to N_p(0, \Sigma_g)\,.
\end{equation*}
Similarly,
\begin{align*}
\Sigma_h &= \Var\left(I(h(X_1) > \phi_h)\right)  \\ 
& \quad \quad +  \sum_{j=2}^{\infty} \left[\Cov(I(h(X_1) > \phi_h), I(h(X_j) > \phi_h))  + \Cov(I(h(X_1) > \phi_h), I(h(X_j) > \phi_h))' \right]
\end{align*}
is positive definite. Thus, due to a joint central limit theorem, there exists a $p_1 \times p_2$ cross-covariance matrices $\Sigma_{gh} = \Sigma_{hg}'$ so that
\begin{equation} 
\label{eq:CLT.joint}
\sqrt{n} \left( 
\begin{pmatrix} \bar{g}_n \\ 1-\bar{F}_h ( \phi_h ) \end{pmatrix} - 
\begin{pmatrix} \theta_g \\ 1-Q \end{pmatrix}
\right) \overset{d}\to N_{p_1+p_2} \left(0, \Sigma =
\begin{pmatrix}
\Sigma_g & \Sigma_{gh} \\
\Sigma_{hg} & \Sigma_h
\end{pmatrix}
\right) .
\end{equation}
Under condition (a) \citep{yosh:1995} or condition (b)
\citep{zha:yang:hu:2014} there is  a Bahadur quantile representation, i.e.
\begin{equation*}
\label{eq:bahadur}
\hat{\xi}_{q_i} = \xi_{q_i} + \frac{\left( 1-\bar{F}_{h_i} (\xi_{q_i}) \right) - (1-q_i)}{f_{h_i} (\xi_{q_i})} + r_{n,q_i}, 
\end{equation*}
where $r_{n,q_i}$ is $o_p(n^{-1/2})$.  Letting $R_n = \left( r_{n,q_1} , \dots, r_{n,q_{p_2}} \right)'$, we have
\begin{equation}\label{eq:bahadur.vec}
\left( 1-\bar{F}_h(\phi_h) \right) - \left( 1 - Q \right) = A_h \left( \hat{\phi}_n - \phi_h \right) + A_h R_n \overset{p}\to A_h \left( \hat{\phi}_n - \phi_h \right).
\end{equation}
Combining \eqref{eq:CLT.joint} and \eqref{eq:bahadur.vec} yields
\begin{equation*}
\begin{split}
\sqrt{n} \left( 
\begin{pmatrix} \bar{g}_n \\ 1-\bar{F}_h ( \phi_h ) \end{pmatrix} - 
\begin{pmatrix} \theta_g \\ 1-Q \end{pmatrix}
\right) 
& = \sqrt{n} \begin{pmatrix} \bar{g}_n - \theta_g \\ A_h \left( \hat{\phi}_n - \phi_h \right) \end{pmatrix} + o_p(1) \\
& = \sqrt{n} \Lambda \begin{pmatrix} \bar{g}_n - \theta_g \\ \hat{\phi}_n - \phi_h \end{pmatrix} + o_p(1) .
\end{split}
\end{equation*}
\end{proof}

\begin{remark}
\cite{wang2011bahadur} weaken the mixing conditions further, but their Bahadur quantile representation is not applicable for Metropolis-Hastings Markov chains and hence is of limited applicability in MCMC settings.
\end{remark}

Theorem~\ref{thm:joint} can be stated more conveniently for Markov chains.  Suppose $\{X_{n}\}$ is a positive Harris recurrent Markov chain with Markov transition kernel
\[
K^n(x, A) = \Pr (X_{n+j} \in A \mid X_{j}=x) 
\]
that gives the dynamics of the chain. Let $\|\cdot\|$ denote the total variation norm.  Further, let $M: \sX \mapsto \R^+$ with $E_{\pi} M < \infty$, and $\psi: \N \mapsto \R^+$ be
decreasing such that
\begin{equation} \label{eq:tvn}
\Vert K^n (x, \cdot) - \pi(\cdot) \Vert \leq M(x) \psi(n) \; .
\end{equation}
\textit{Polynomial ergodicity of order $m$} where $m > 0$ means \eqref{eq:tvn} holds with $\psi(n) = n^{-m}$.  \textit{Geometric
  ergodicity} means \eqref{eq:tvn} holds with $\psi(n) = t^n$ for some
$0 < t < 1$, and is thus stronger than polynomial ergodicity.   The standard techniques \citep{jone:hobe:2004} for establishing geometric ergodicity ensure $\text{E}_{\pi}M < \infty$, but this is more complicated in the polynomially ergodic setting; see \citet{jone:2004} for some discussion.  It is standard \citep{chan:geye:1994, doss:fleg:jone:neat:2014, jone:2004} that 
\[
\alpha(n) \le  \psi(n) \text{E}_{\pi} M 
\]
which yields the following result.
\begin{corollary}
\label{cor:mcmc}
Suppose $F_{h_i}(v)$ is absolutely continuous and twice-differentiable with density $f_{h_i}(v)$ such that $0 < f_{h_i} (\xi_{q_i}) < \infty$ and $f'_{h_i}(v)$ is bounded in some neighborhood of $\xi_{q_i}$. Let $\{ X_n \}$ be a Harris ergodic Markov chain with stationary distribution $\pi$.  If 
\begin{enumerate}[(a)]
\item there exist $B$ such that $|h_i(x)| < B$ for all $i$ and  $\{ X_n \}$ is polynomial ergodic of order $5/2 + \epsilon$ for some $\epsilon > 0$, or
\item $\{X_n\}$ is polynomially ergodic of order $\kappa  > \max\{3, (2 + \delta)/\delta \}$ where $\delta > 0$ is such that $\text{E}_{\pi}\|g\|^{2 + \delta} < \infty$, or
\item there exist $B$ such that $|h_i(x)| < B$ for all $i$ and  $\{ X_n \}$ is geometrically ergodic, or
\item $\{ X_n \}$ is geometrically ergodic and $E_\pi \|g\|^{2+\delta}  < \infty$ where $\delta > 0$,
\end{enumerate}
then, as $n \to \infty$, \eqref{eq:CLT:final} holds for any initial distribution of the Markov chain. 
\end{corollary}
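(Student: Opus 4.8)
The plan is to obtain Corollary~\ref{cor:mcmc} from Theorem~\ref{thm:joint} in two steps: first convert each Markov-chain ergodicity hypothesis into an $\alpha$-mixing rate for the \emph{stationary} version of $\{X_n\}$ that meets one of the theorem's hypotheses, and then remove the stationarity assumption so that the conclusion holds from an arbitrary initial law.

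For the first step I would use the bound $\alpha(n) \le \psi(n)\,\mathrm{E}_\pi M$ recorded just above the corollary; note that the defining inequality~\eqref{eq:tvn} already carries $\mathrm{E}_\pi M < \infty$ with it, so this bound is immediately available. In case~(a), polynomial ergodicity of order $5/2 + \epsilon$ means $\psi(n) = n^{-5/2-\epsilon}$, so $\alpha(n) = O(n^{-5/2-\epsilon})$, which together with the uniform bound $|h_i(x)| < B$ is precisely hypothesis~(a) of Theorem~\ref{thm:joint}. In case~(b), polynomial ergodicity of order $\kappa$ gives $\alpha(n) = O(n^{-\kappa})$ with $\kappa > \max\{3,(2+\delta)/\delta\}$, which with $\mathrm{E}_\pi\|g\|^{2+\delta} < \infty$ is hypothesis~(b). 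In cases~(c) and~(d) the chain is geometrically ergodic, so $\psi(n) = t^n$ for some $0 < t < 1$; since $t^n = o(n^{-r})$ for every $r > 0$, we again obtain $\alpha(n) = O(n^{-5/2-\epsilon})$ for, say, $\epsilon = 1$ in case~(c) (using $|h_i| < B$) and $\alpha(n) = O(n^{-\kappa})$ for the required $\kappa$ in case~(d) (using $\mathrm{E}_\pi\|g\|^{2+\delta} < \infty$), and here the standard drift--minorization route of \citet{jone:hobe:2004} supplies $\mathrm{E}_\pi M < \infty$. The smoothness conditions on $F_{h_i}$ are identical to those of the theorem, and the positive-definiteness of $\Sigma$ in~\eqref{eq:mcmc.cov} is inherited exactly as there. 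Hence Theorem~\ref{thm:joint} applies to the stationary chain and yields~\eqref{eq:CLT:final}.

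For the second step I would pass to an arbitrary initial distribution $\lambda$ by coupling. Since $\{X_n\}$ is Harris ergodic, a copy started from $\lambda$ can be coupled with a stationary copy so that the two paths agree for every $n$ past an almost surely finite random time $T$ \citep[see][]{meyn:twee:1993}. Then $\bar{g}_n^{(\lambda)} - \bar{g}_n^{(\pi)} = n^{-1}\sum_{j=1}^{T}\{g(X_j^{(\lambda)}) - g(X_j^{(\pi)})\} = O_p(n^{-1})$, and, uniformly in $v$, $\bar{F}_{h_i}^{(\lambda)}(v) - \bar{F}_{h_i}^{(\pi)}(v) = O_p(n^{-1})$; because $f_{h_i}$ is positive and continuous near $\xi_{q_i}$, this forces $\hat{\phi}_n^{(\lambda)} - \hat{\phi}_n^{(\pi)} = o_p(n^{-1/2})$ as well (equivalently, feed the sup-norm bound through the Bahadur representation used in the proof of Theorem~\ref{thm:joint}). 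Slutsky's theorem then transfers~\eqref{eq:CLT:final} from $\pi$ to $\lambda$.

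The main obstacle I anticipate is making the stationarity-removal step fully rigorous rather than the mixing bookkeeping, which is essentially arithmetic. In particular, the Bahadur quantile expansions of \citet{yosh:1995} and \citet{zha:yang:hu:2014} are stated for strictly stationary strongly mixing sequences, so one must either argue the order-statistic difference directly from the empirical-CDF coupling bound together with a local density lower bound (as sketched above), or appeal to a version of those representations valid from an arbitrary starting law; either way, this is where the care is needed. Everything else reduces to matching the mixing rates to the hypotheses of Theorem~\ref{thm:joint} and invoking Slutsky.
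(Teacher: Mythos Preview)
Your proposal is correct and follows essentially the same two-step strategy as the paper: convert each ergodicity assumption into an $\alpha$-mixing rate via $\alpha(n)\le\psi(n)\,\mathrm{E}_\pi M$ so that Theorem~\ref{thm:joint} applies to the stationary chain, and then transfer the conclusion to an arbitrary initial distribution using Harris recurrence. The only difference is presentational: the paper dispatches the second step by pointing to an argument ``similar to'' Proposition~17.1.6 of \citet{meyn:twee:1993}, whereas you write out the coupling explicitly and, appropriately, flag that the quantile component requires the additional route through the empirical CDF and the local density bound, since that proposition is stated for partial sums rather than order statistics.
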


The conclusion that the claim holds for any initial distribution, even
point masses, holds by an argument similar to that of Proposition
17.1.6 of \cite{meyn:twee:1993}, which says that for positive Harris
Markov chains, if a CLT holds for one initial distribution, then it
holds for every initial distribution.  Next, there has been
significant work on establishing that Markov chains encountered in
statistics are at least polynomially ergodic; see
\cite{doss:hobe:2010}, \cite{ekva:jone:2019}, \cite{hobe:geye:1998},
\cite{hobe:jone:robe:2005}, \cite{jarn:hans:2000},
\cite{jarn:robe:2002}, \cite{john:jone:neat:2013},
\cite{john:jone:2015}, \cite{jone:robe:rose:2014},
\cite{jone:hobe:2004}, \cite{khare:hobe:2013}, \cite{marc:hobe:2004},
\cite{robe:pols:1994}, \cite{tan:jone:hobe:2013},
\cite{tan:hobe:2009}, and \cite{vats:2017}, among many others.

Consider the case where $\{X_{n}\}$ is an IID process from $\pi$.
While it is immediate that $\{X_{n}\}$ is strongly mixing, we can do
better than Theorem~\ref{thm:joint} since weaker density and moment
conditions are required to establish the Bahadur quantile
representation \citep{ghosh1971new}.  The proof the following result
is omitted since it is similar to the proof of
Theorem~\ref{thm:joint}.

\begin{theorem} \label{thm:IID}
Suppose $\{ X_n \}$ is an IID process from $\pi$ with $E_\pi \|g\|^2  < \infty$.  If $0 < f_{h_i} (\xi_{q_i}) < \infty$ and  $f'_{h_i}$ is bounded in a neighborhood of $\xi_{q_i}$ for all $i$, then, as $n \to \infty$, \eqref{eq:CLT:final} holds with $\Sigma = \Var_{\pi}(Y_1)$.
\end{theorem}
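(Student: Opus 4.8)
The plan is to follow the structure of the proof of Theorem~\ref{thm:joint}, replacing each strong-mixing input by its IID counterpart, which is available under the weaker hypotheses stated here. First I would establish the joint limit for the sample means and the empirical tail probabilities. Since $\{X_n\}$ is IID, the vectors $Y_j = \left( g(X_j), I(h(X_j) > \phi_h) \right)'$ are IID with mean $(\theta_g', (1-Q)')'$; the $g$-coordinates have finite second moment by assumption and the indicator coordinates are bounded, so $\Var_\pi(Y_1)$ is a finite $(p_1+p_2)\times(p_1+p_2)$ matrix (the off-diagonal blocks being finite by Cauchy--Schwarz). The classical multivariate central limit theorem for IID summands then gives
\[
\sqrt{n}\left( \begin{pmatrix} \bar{g}_n \\ 1 - \bar{F}_h(\phi_h) \end{pmatrix} - \begin{pmatrix} \theta_g \\ 1-Q \end{pmatrix} \right) \overset{d}\to N_{p_1+p_2}\left( 0, \Sigma \right), \qquad \Sigma = \Var_\pi(Y_1),
\]
which is the IID analogue of \eqref{eq:CLT.joint}; all the cross-lag covariance sums appearing in \eqref{eq:mcmc.cov} vanish, leaving only the single-term covariance. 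If $\Sigma$ is singular the limit is a degenerate normal, which is harmless for what follows.

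Next I would invoke the IID Bahadur representation. Under the stated conditions $0 < f_{h_i}(\xi_{q_i}) < \infty$ with $f'_{h_i}$ bounded near $\xi_{q_i}$ (indeed mere differentiability of $F_{h_i}$ at $\xi_{q_i}$ with positive derivative suffices), \cite{ghosh1971new} yields, for each $i$,
\[
\hat{\xi}_{q_i} = \xi_{q_i} + \frac{\left( 1-\bar{F}_{h_i}(\xi_{q_i}) \right) - (1-q_i)}{f_{h_i}(\xi_{q_i})} + r_{n,q_i}, \qquad r_{n,q_i} = o_p(n^{-1/2}).
\]
Stacking the $p_2$ components and writing $R_n = (r_{n,q_1}, \dots, r_{n,q_{p_2}})'$ — a vector of fixed finite length each of whose coordinates is $o_p(n^{-1/2})$, hence $R_n = o_p(n^{-1/2})$ — reproduces \eqref{eq:bahadur.vec}, namely
\[
\left( 1 - \bar{F}_h(\phi_h) \right) - \left( 1 - Q \right) = A_h \left( \hat{\phi}_n - \phi_h \right) + A_h R_n .
\]

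Finally I would combine the two displays exactly as in the proof of Theorem~\ref{thm:joint}: the left-hand side of the joint CLT equals $\sqrt{n}\,\Lambda \left( \bar{g}_n - \theta_g, \; \hat{\phi}_n - \phi_h \right)' + o_p(1)$, so multiplying through by $\Lambda^{-1}$ — which exists since $A_h$ is diagonal with strictly positive entries $f_{h_i}(\xi_{q_i})$ — and applying Slutsky's theorem together with the fact that a nonsingular linear map preserves convergence in distribution gives $\sqrt{n}\left( \bar{g}_n - \theta_g, \; \hat{\phi}_n - \phi \right)' \overset{d}\to N(0, \Lambda^{-1}\Sigma\Lambda^{-1})$ with $\Sigma = \Var_\pi(Y_1)$, which is \eqref{eq:CLT:final}. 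The only point requiring genuine care — and hence the ``main obstacle,'' though it is a modest one — is verifying that Ghosh's IID Bahadur representation applies under precisely the regularity imposed here and that its componentwise $o_p(n^{-1/2})$ conclusion upgrades to the joint vector statement; everything else is the classical CLT plus the algebra already carried out for Theorem~\ref{thm:joint}, which is why the paper omits the details.
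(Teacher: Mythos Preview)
Your proposal is correct and follows exactly the route the paper intends: replace the strong-mixing CLT in the proof of Theorem~\ref{thm:joint} by the classical multivariate CLT for the IID vectors $Y_j$, invoke \cite{ghosh1971new} for the Bahadur representation under the weaker density assumptions, and then repeat the same algebraic combination via $\Lambda$ and Slutsky's theorem. This is precisely why the paper omits the proof as ``similar to the proof of Theorem~\ref{thm:joint}.''
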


Some special cases of Theorem~\ref{thm:IID} have been studied.  Laplace found the joint asymptotic distribution of the sample mean and the sample median \citep{stigler1973studies}.   \cite{ferg:1998} considers a sample mean and an arbitrary quantile \citep[also see][]{lin1980asymptotic} and \cite{babu1988joint} provides an expression for the covariance between two quantiles.

\subsection{Variance estimation}

Making use of Theorems~\ref{thm:joint} and~\ref{thm:IID} requires
estimation of $\Lambda^{-1} \Sigma \Lambda^{-1}$.  Because $\Lambda$
is diagonal with non-zero diagonals $\Lambda^{-1}$ is readily
available.  The only potential difficulty is that we need to estimate
$f_{h_i} (\xi_{q_i})$, but it is easy to use kernel density estimators
with a Gaussian kernel to estimate $f_{h_i}(\hat{\xi}_{q_i})$, and
hence estimate $\Lambda$.  This approach can perform well even for
dependent sequences \citep{doss:fleg:jone:neat:2014}.

The matrix $\Sigma$ requires more attention since IID and strongly
mixing sequences yield different structures of $\Sigma$.  For IID
sampling, set $S_{j} = ( g(X_j) , I(h(X_j) > \hat{\phi}_n)) '$,
$\Sigma$ in Theorem~\ref{thm:IID} can be consistently estimated with
the sample covariance of $\{S_1, S_2, \dots, S_n\}$. For a stationary
strongly mixing sequence, an expression for $\Sigma$ is given by
\eqref{eq:mcmc.cov} and estimating it has been studied using batch
means \citep{chen1987multivariate, jone:hara:caff:neat:2006,
  vats:fleg:jone:2019}, weighted batch means \citep{liu:fleg:2018},
spectral variance \citep{andr:1991,prie:1981, vats:fleg:jone:2018},
initial sequence \citep{dai:jones:2017}, recursive
\citep{chan2017automatic} and regenerative sampling estimators
\citep{seil:1982, hobe:jone:pres:rose:2002}.

In most simulation settings the amount of simulated data potentially
available (letting $p=p_{1}+p_{2}$, then all of $d$, $n$, and $p$ can
be large) forces consideration of the computational effort required to
estimate $\Sigma$.  Batch means has proven to be the most
computationally efficient, there are conditions comparable to those in
Corollary~\ref{cor:mcmc} for strong consistency, and it often
enjoys good finite-sample properties \citep[for more discussion on
these points see][]{chen1987multivariate, vats:fleg:jone:2019}.  Other
methods such as spectral variance estimators can also work well,
especially when $n$ is not too large.  We will restrict our attention
to batch means estimators, which are now defined.  Let $n = ab$ where
$a$ is the number of batches and $b$ is the batch size. The mean for
the $k^{th}$ batch is $\bar{S}_k(b) = b^{-1}\sum_{t=1}^b S_{kb+t}$ and
the overall mean is $\bar{S} = a^{-1}\sum_{k=1}^a \bar{S}_{k}(b)$.
Then the batch means estimator with batch size $b$ is
\[
\hat{\Sigma} = \frac{b}{a-1}\sum_{k=0}^{a-1}
\left(\bar{S}_k(b)-\bar{S} \right) \left(\bar{S}_k(b)-\bar{S} \right)' .
\]
We will typically use a batch size equal to
$\lfloor \sqrt{n} \rfloor$.   If $p$ is large, then batch means may
exhibit downward bias leading to noticeable undercoverage of the
simultaneous confidence intervals developed below. Other variance
estimators such as weighted batch means or a lugsail window function
\citep{vats:fleg:2018} may be used to induce upward bias, but require
a larger computational effort.

\section{Simultaneous confidence intervals}
\label{sec:simul_CI}

We consider $p$-dimensional simultaneous confidence regions for
$\nu$.  Let
$\hat{\Lambda}^{-1} \hat{\Sigma} \hat{\Lambda}^{-1}$ be a strongly
consistent estimator of $\Lambda^{-1} \Sigma \Lambda^{-1}$. Then
Theorem~\ref{thm:joint} allows for the construction of Wald
(ellipsoidal) confidence regions having the desired asymptotic
coverage, $1-\alpha$.
However, visualizing a $p$-dimensional Wald ellipsoid is difficult
beyond two dimensions and is difficult to use in assessing reliability
of Monte Carlo experiments. Instead, we consider hyperrectangular
confidence regions using the information from the full covariance
structure $\Lambda^{-1} \Sigma \Lambda^{-1}$, and not just the
diagonals. The proposed technique is similar to the balanced bootstrap
\citep{beran1988balanced, bruder2018balanced, romano2010balanced} and
the adjusted-Wald intervals \citep{lutk:stas:anna:2015} used primarily
in multiple testing, but we avoid the use of computationally-intensive
naive bootstrap methods.

The basic approach is to consider hyperrectangular regions
$C_{LB} \subseteq C_{UB}$ where $C_{LB}$ is so small that it will have
coverage no greater than $1-\alpha$ while $C_{UB}$ is so large that it
will have coverage at least $1-\alpha$.  There is some
hyperrectangular region, say $C_{\alpha}$, between these, i.e.
$C_{LB} \subseteq C_{\alpha} \subseteq C_{UB}$, which will have
coverage of $1-\alpha$.  In general, finding $C_{\alpha}$ will be
computationally intensive, but we provide a simple solution
by considering a particular subclass of regions using the results of
Section~\ref{sec:joint_clt}.

Let $\hat{\sigma}_{g_i}$ be the $i$th diagonal of $\Sigma_g$,
$i = 1, \dots, p_1$ and $\hat{\gamma}_j$ be the $j$th diagonal of
$\hat{\Lambda}^{-1} \hat{\Sigma}_h \hat{\Lambda}^{-1}$,
$j = 1, \dots, p_2$. For $z > 0$, consider the hyperrectangular
confidence regions of the form
\[
  C_{SI}(z) = \prod_{i=1}^{p_1} \left[ \bar{g}_{n,i} - z
    \dfrac{\hat{\sigma}_{g_i} }{n}, \bar{g}_{n,i} + z
    \dfrac{\hat{\sigma}_{g_i} }{n} \right] \prod_{j=1}^{p_2} \left[
    \hat{\phi}_{n,j} - z \dfrac{\hat{\gamma}_{j} }{n},
    \hat{\phi}_{n,j} + z \dfrac{\hat{\gamma}_{j} }{n} \right]\,.
\]
Setting $z = z_{1-\alpha/2}$ yields uncorrected intervals that
simultaneously have coverage no greater than $1-\alpha$ and we set
$C_{LB}: = C_{SI}(z_{1-\alpha/2})$. With $z = z_{1-\alpha/2p}$, we get
the Bonferroni-corrected hyperrectangular region which has coverage at
least $1 - \alpha$ and we set $C_{UB} := C_{SI}
(z_{1-\alpha/2p})$. While there are many other conservative approaches,
 such as the \u{S}id\'{a}k correction \citep{vsidak:1967},
Scheffe's correction \citep{scheffe:1999s}, and Wald projections
\citep{lutk:stas:anna:2015}, we have found the Bonferroni approach to
be an effective way to construct $C_{UB}$.

All that is left is to construct $C_{\alpha}$, but this is easily done
if we can find $z^*$ such that
$z_{1-\alpha} \leq z^* \leq z_{1-\alpha/2p}$ and $C_{SI}(z^*)$ has
coverage $1-\alpha$, in which case we set
$C_{\alpha}=C_{SI}(z^*)$. The approach to finding $z^*$ is illustrated
in Figure~\ref{fig:sim.int}, but more formally, the optimization
algorithm works as follows: suppose
$U \sim N_p(\hat{\nu}_{n}, \hat{\Lambda}^{-1} \hat{\Sigma}
\hat{\Lambda}^{-1}/n)$, then $\Pr(U \in C_{LB}) \le (1-\alpha)$ and
$\Pr(U \in C_{UB}) \ge (1-\alpha)$.  Since
$\Pr(U \in \mathcal{C}_{SI}(z))$ is strictly increasing as $z$
increases, we use the bisection method between $z_{1-\alpha/2}$ and
$z_{1-\alpha/2p}$ to find $z^*$ such that
$\Pr(U \in \mathcal{C}_{SI}(z^*)) \approx (1-\alpha)$ and set
$C_{\alpha}= \mathcal{C}_{SI}(z^*)$.  Thus finding the $p$-dimensional
region $C_{SI}(z^*)$ only requires optimizing over a univariate
parameter, $z$.  The required multivariate normal probabilities can be
quickly and accurately calculated using quasi-Monte Carlo methods
\citep{R:mvtnorm, genz2009computation}.

\begin{figure}[!htb]
\begin{center}
\includegraphics[width=10cm]{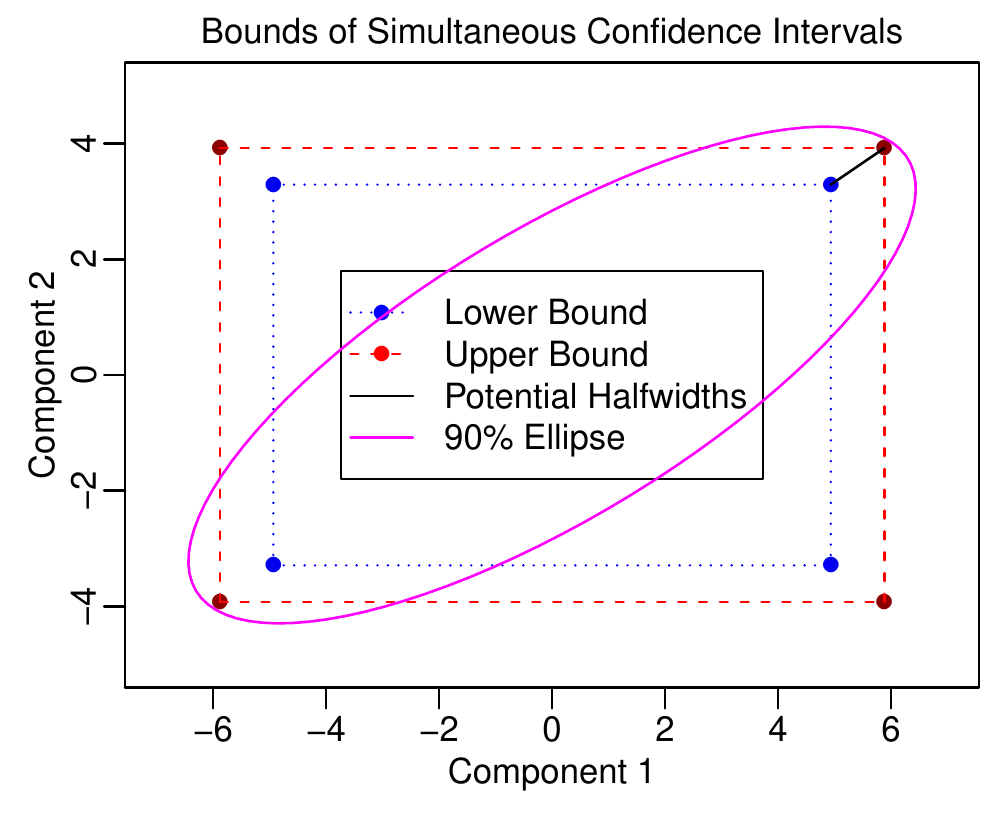}
\caption{Plot of $C_{LB}$ (blue dotted) and $C_{UB}$ (red dashed) from a $90\%$ confidence region for a bivariate normal distribution with component variances 9 and 4.  The black line in the first quadrant indicates the potential search values to achieve the desired coverage level.}
\label{fig:sim.int}
\end{center}
\end{figure}

\section{Applications}\label{sec:examples}

This section considers implementation of the methodology in three
examples.  The first one expands on the mixture model from
Section~\ref{sec:intro} demonstrating use of the methods in both IID
sampling and MCMC settings and investigating the finite-sample
properties of the proposed confidence regions.  The second application
is to a setting where we compare the finite-sample properties of three
frequentist statistical models, and the third demonstrates the use of our
methods in a Bayesian hierarchical model. In each application, we
identify a combination of means and quantiles of interest, obtain
$1- \alpha$ level simultaneous confidence intervals, and integrate the
intervals into a plot which is standard in the respective
applications.  The full \texttt{R} code is available as part of the
supplementary material.

\subsection{Mixture of normals}

For $j=1,2,3$, let $f_j(x;\mu_j,\sigma_j^2)$ be a normal density with
mean $\mu_j$ and variance $\sigma_j^2$. Set
\begin{equation*}\label{eq:mix:3comp}
\pi(x) = .3f_1(x;1,2.5) + .5f_2(x;5,4) + .2f_3(x;11,3).
\end{equation*}
We consider simultaneous estimation of the mean, $.10$ quantile, and
$.90$ quantile denoted $\mu$, $\xi_{.10}$, and $\xi_{.90}$,
respectively, and hence
$\nu = (\mu, \xi_{.10}, \xi_{.90})$.  Notice that we can
use numerical integration methods to find that
$ (\mu, \xi_{.10}, \xi_{.90}) = (5, .2544116, 11.0143117)$
with absolute error less than 2.5e-6.

We consider both standard Monte Carlo (IID samples) and MCMC metods
for simulating from $\pi$.  Producing IID samples is easy, simply
choose a component according to its mixture probability and then
simulate from it.  In the MCMC setting we use a random walk
Metropolis-Hastings sampler with a $N(0, 9)$ proposal distribution.
In either case we estimate $\nu$ and
$\Lambda^{-1} \Sigma \Lambda^{-1}$ as described in
Section~\ref{sec:joint_clt}. We then calculate
$C_{SI}(z^\ast)$ at the 90\% confidence level and estimate the density
to create Figure~\ref{fig:mix} (the lower left panel appeared in the
introduction). The estimates of $\nu$ are represented by
purple lines with the blue region around each estimate representing
the simultaneous confidence regions.  Notice that as the number of
Monte Carlo samples increases, the width of the intervals decreases.
Also, as should be expected, the width of the intervals based on MCMC
samples is wider than those based on IID samples. Finally, note that
because the shape of the density is asymmetric, the two quantiles
occur at different density values. This affects the value of
$\Lambda^{-1}$ and contributes to the different lengths of the error
regions around each estimate.

\begin{figure}[!hbt]
\begin{center}
    \includegraphics[width = 2.6in]{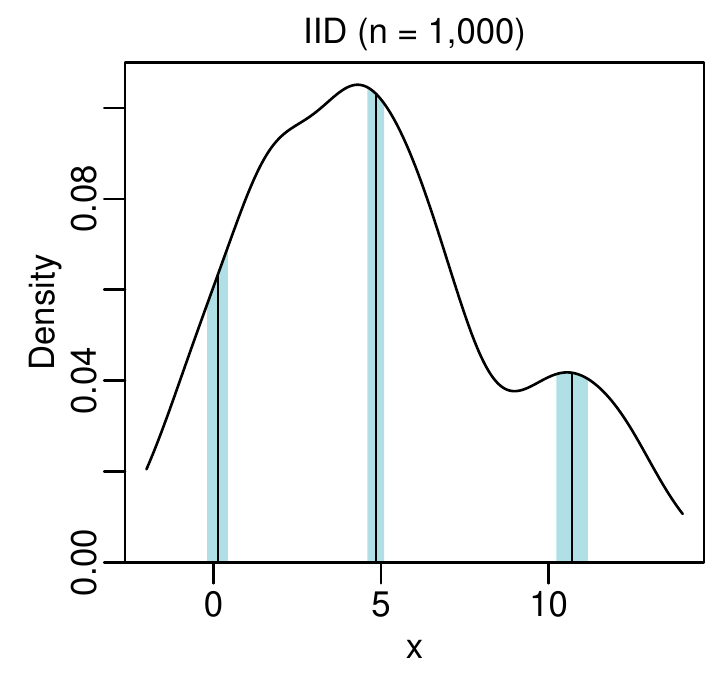}
    \includegraphics[width = 2.6in]{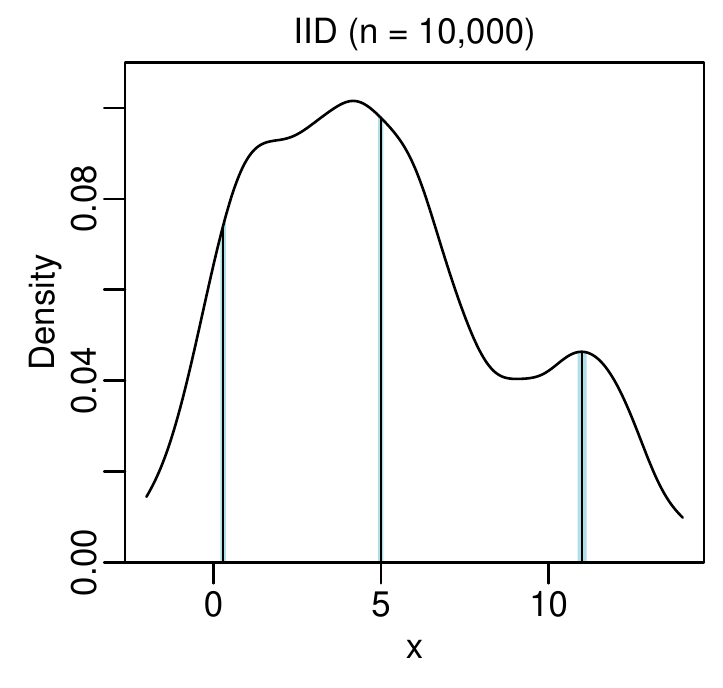}
    \includegraphics[width = 2.6in]{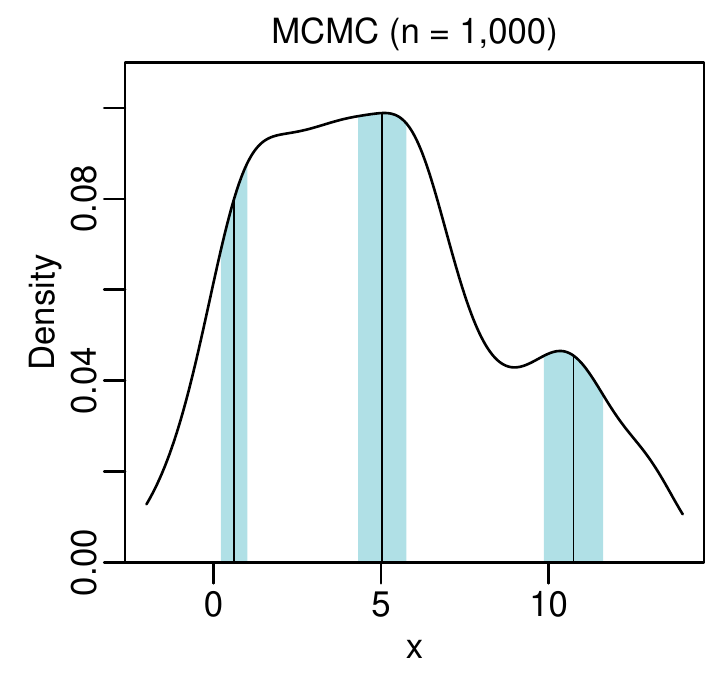}
    \includegraphics[width = 2.6in]{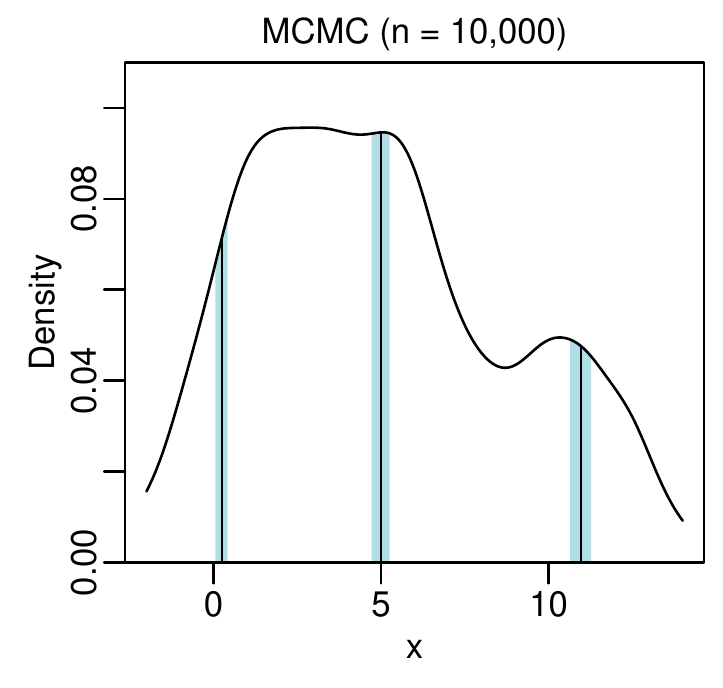}
\caption{Simultaneous 90\% confidence intervals of the mean, $.10$ quantile, and $.90$ quantile from a mixture of normal distributions.}
\label{fig:mix}
\end{center}
\end{figure}

Figure~\ref{fig:mix} concerned a single Monte Carlo simulation where 
every interval presented (12 total) contained the true parameter.  We now turn our
attention to investigating the finite-sample simultaneous coverage
probabilities of the intervals $\mathcal{C}_{SI}(z^*)$.  More
specifically, we do both the IID and MCMC sampling for 2000
independent replications with $n=10000$.  In each replication we calculate
simultaneous confidence intervals $\mathcal{C}_{SI}(z^*)$, uncorrected
marginal intervals $C_{LB}$, and simultaneous Bonferroni intervals
$C_{UB}$ at the 80\% and 90\% confidence levels for which we record
whether each region contains the true value. Thus, we have six binary
outcomes, one for each region and confidence level combination, which
are naturally correlated since we are using the same simulated data.
We then use our procedure, based on Theorem~\ref{thm:joint}, with
overall 95\% confidence level to plot the point estimate of the
coverage along with simultaneous confidence regions in
Figure~\ref{fig:cover}.  

\begin{figure}[!htb]
\begin{center}
\includegraphics[]{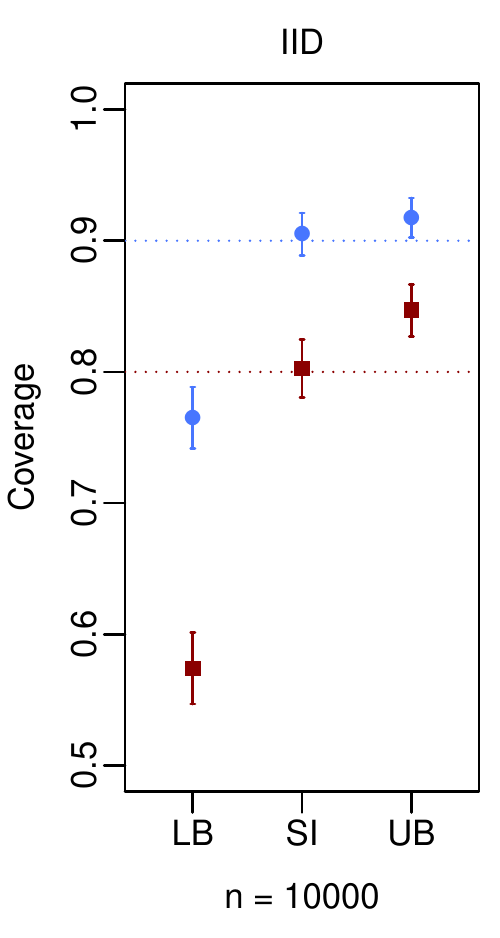}
\includegraphics[]{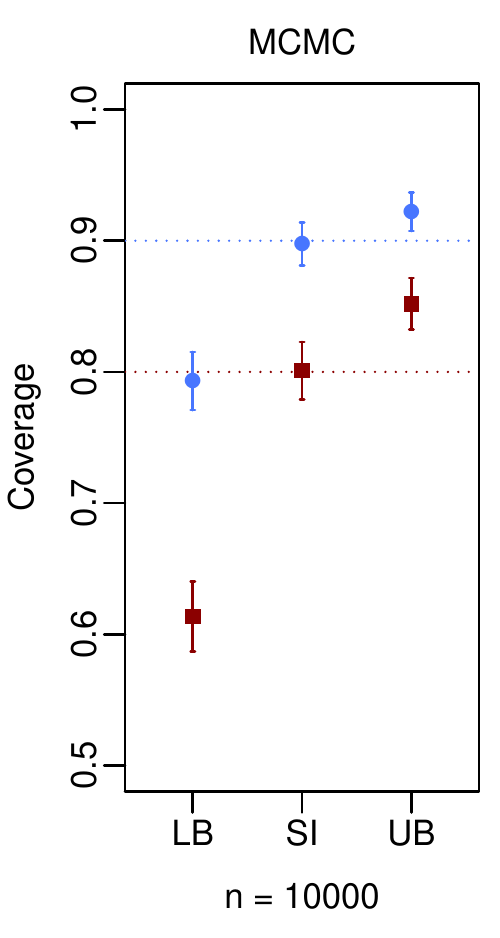}
\caption{Simultaneous 95\% confidence intervals for coverage
  probabilities based on 2,000 replications comparing uncorrected
  marginal intervals $C_{LB}$, simultaneous confidence intervals
  $\mathcal{C}_{SI}(z^*)$, and simultaneous Bonferroni intervals
  $C_{UB}$. Blue intervals with a circle and dark red intervals with a
  square correspond to .9 and .8 nominal levels, respectively.}
\label{fig:cover}
\end{center}
\end{figure}

From Figure~\ref{fig:cover} we can see that $C_{LB}$ yields
significant undercoverage while failing to ever capture the nominal
coverage probability within any of its interval estimates. For
$C_{SI}(z^\ast)$, the confidence intervals contain the nominal level
illustrating that the simultaneous confidence
intervals yield coverage close to the nominal level.  Bonferroni
intervals, $C_{UB}$, overestimate the nominal level although this
overcoverage is relatively small since the adjustment is based on a
small number of quantities.  However, estimation procedures of higher
dimensionality will correspond to more conservative estimates for the
upper bound.  

\subsection{Comparing methods with boxplots}

Finite-sample performance of statistical methodologies is often done via loss function comparisons with existing methods over repeated
simulations with results presented in side-by-side boxplots. Our
methodology can be used to illustrate the variability in the
estimation of the quantiles in the boxplots and provides a tool to
assess whether sufficient replications have been used or if observed
differences are subject to substantial Monte Carlo error.

\begin{figure}[!htb]
\begin{center}
  \begin{subfigure}[]{0.38\linewidth}
    \includegraphics[width=\linewidth]{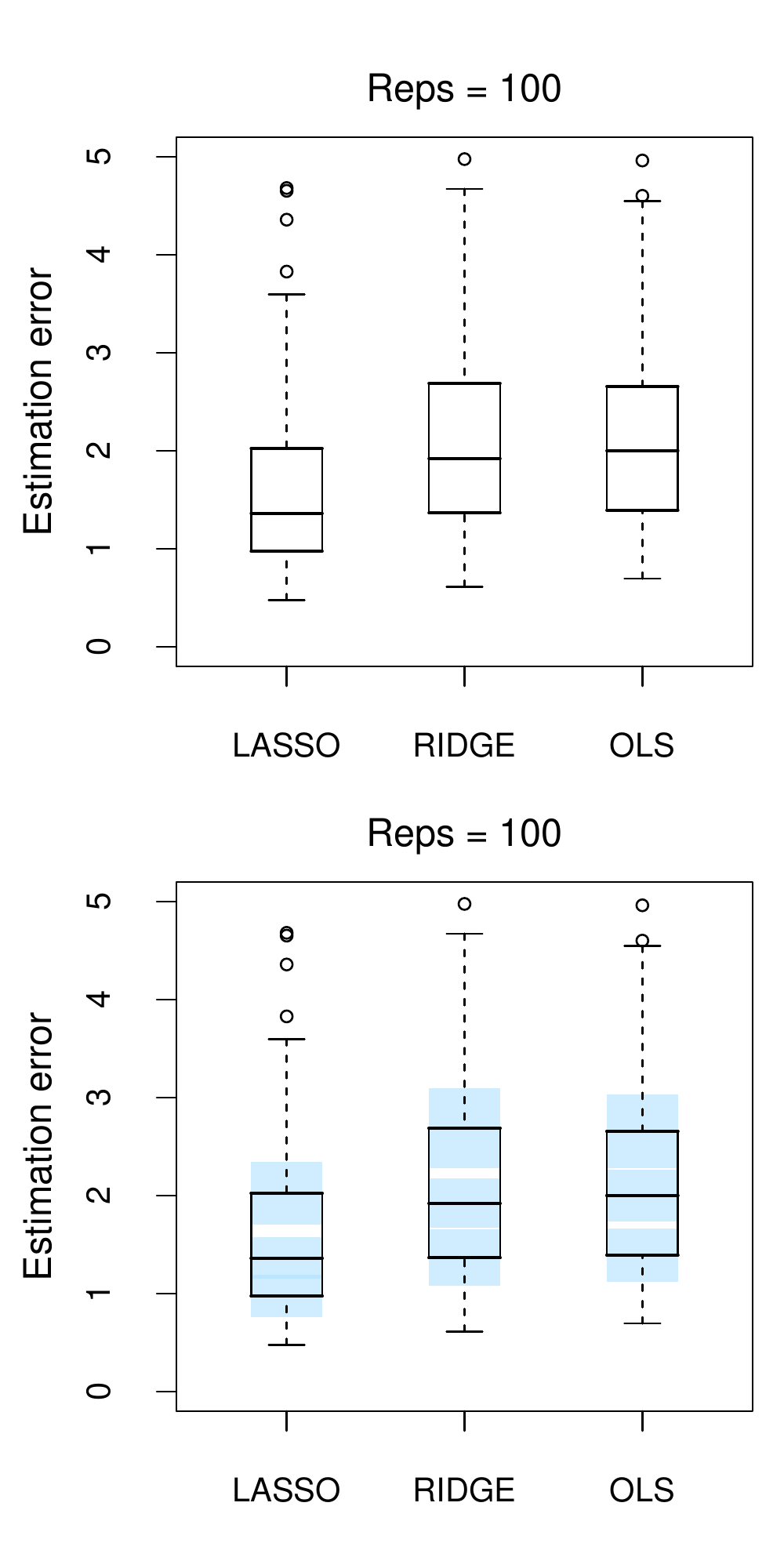}
  \end{subfigure}
  \begin{subfigure}[]{0.38\linewidth}
    \includegraphics[width=\linewidth]{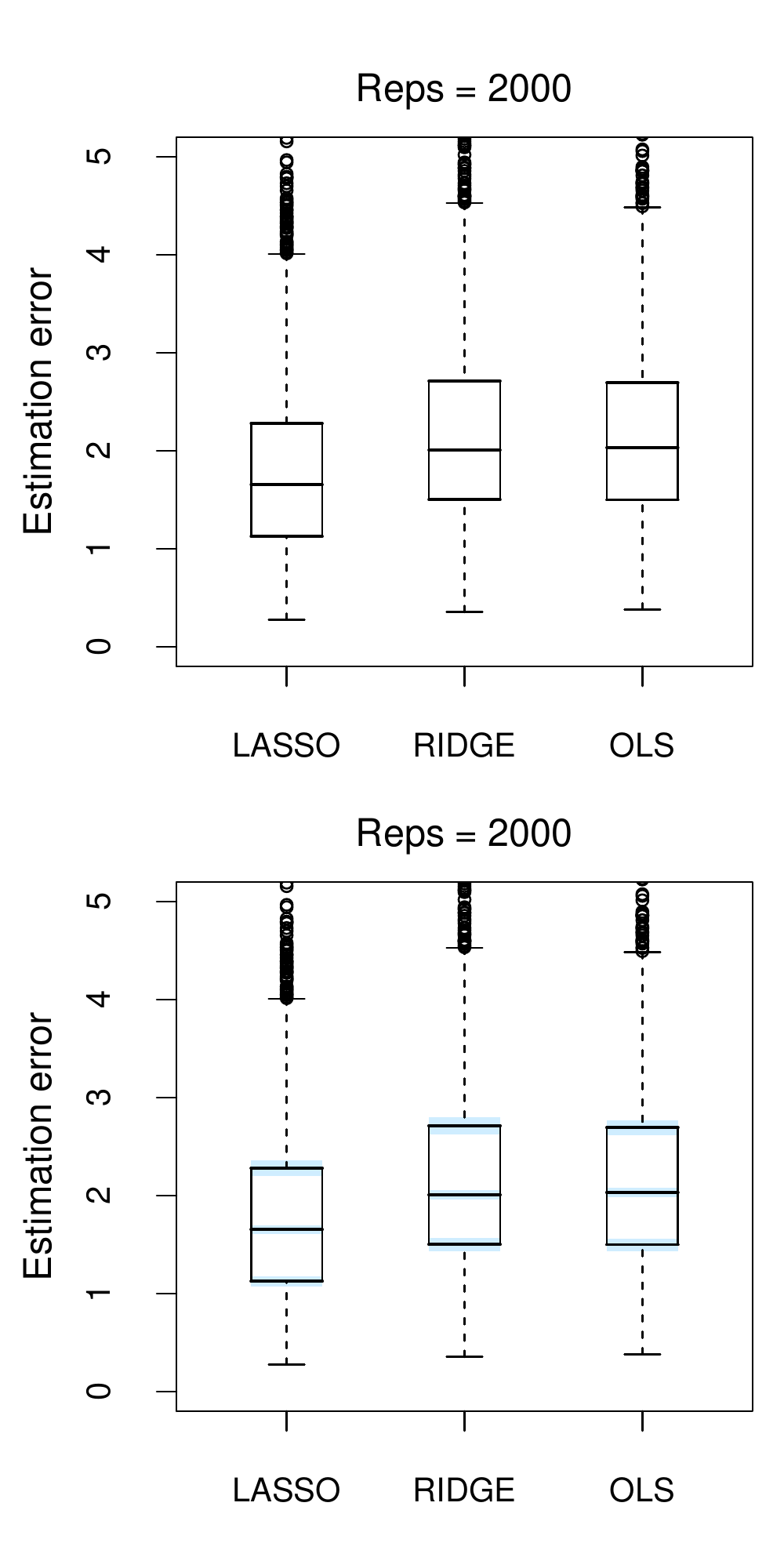}
  \end{subfigure}
\caption{Boxplots of squared estimation error for lasso, ridge, and
  OLS with and without simultaneous confidence intervals.} 
\label{fig:MC_simstudy}
\end{center}
\end{figure}
Consider a comparison of lasso \citep{tibshirani1996}, ridge
\citep{hoerl1970ridge}, and ordinary least squares (OLS)
regressions. We will sample $n \in \{100, 2000\}$ independent vectors
of observations $y \in \mathbb{R}^{100}$ from the following model.
Let $Z$ be a $100 \times 21$ dimensional matrix of covariates, and
$\beta^* \in \mathbb{R}^{21}$ be the true regression coefficient
vector. For $\epsilon \sim N_{100}(0, I_{100})$, our data generating
model is
\[
Y = Z\beta^* + \epsilon .
\]
We set $\beta^*$ to be such that the first 11 elements are zero, and
the last 10 are random draws from a normal distribution with mean 0
and variance 2. The matrix $Z$ is constructed such that the first
column is all 1s, and the rows of $Z_{-1}$, the matrix $Z$ with the
first column removed, are drawn from $N_{20}(0, \Omega)$, where the
$ij$th entry of $\Omega$ is $.90^{|i-j|}$. 

For each of the simulated $y \in \mathbb{R}^{100}$ we fit lasso,
ridge, and OLS regressions to estimate the vector of
coefficients. Lasso and ridge estimates are obtained using the
\texttt{glmnet} package \citep{friedman2009glmnet} with tuning
parameters chosen using cross-validation. In each replication, we note
the squared estimation error of the estimated coefficient,
$\hat{\beta}$, that is, $\|\hat{\beta} - \beta^*\|^2$.

Figure~\ref{fig:MC_simstudy} presents boxplots of the squared
estimation error over replications with and without simultaneous
confidence intervals. Recall a box in the boxplot has 25\%, 50\%, and
75\% quantiles. To construct simultaneous confidence intervals, we
appeal to the 9-dimensional joint asymptotic distribution for IID
sequences for these quantiles. The simultaneous confidence intervals
immediately indicate that with only 100 Monte Carlo replications, all
quantile estimates have large variation and that the Monte Carlo error
is swamping any differences between the estimated quantiles of all
three methods. This variability is substantially smaller with 2000
Monte Carlo replications where we can see that the Monte Carlo error
is sufficiently small so that we can claim that ridge and OLS perform
similarly while the lasso is superior. Incorporating our
simultaneous confidence regions allows us to assess the reliability
of the simulations results in a way that is impossible with the top row in Figure~\ref{fig:MC_simstudy}.

\subsection{Bayesian analyses}
\label{sec:bayes}
Consider \pcite{rubi:1981} school data. At eight schools students
were put into SAT prep classes or in a control group where they did not
receive coaching.  The data reported is the estimated effect of prep
classes on verbal SAT scores and the standard deviation at each
school. We are interested in estimating the coaching effect at each
school.

A standard model \citep{gelm:carl:ster:rubi:2004} for
analyzing this data is as follows.  For $j = 1, \dots, J$, let
\begin{equation*}
Y_j \mid \theta_j \stackrel{ind}{\sim} N(\theta_j, \sigma_j^2)
\end{equation*}
\begin{equation*}
\theta_j \stackrel{ind}{\sim} N(\mu, \tau^2)\,,
\end{equation*}
with $\sigma^2_j$ known and priors $f(\mu) \propto 1$ and
$f(\tau) \propto 1/\tau$. We are interested in producing interval
estimates of the coaching effect for each school, that is the
$\theta_{j}$'s. To this end we simulate draws from the joint
posterior $\theta, \mu, \tau|y$ with a deterministic scan Gibbs
sampler described in the supplementary material and estimate posterior
credible intervals for each $\theta_{j}$.

In Figure~\ref{fig:school.large.var} we plot estimated marginal
densities and 80\% credible intervals for each of the $\theta_{j}$'s
based on $n=10,000$ MCMC samples.  We then use our procedure, based
on Theorem~\ref{thm:joint} and Section~\ref{sec:simul_CI}, to include
simultaneous 90\% confidence intervals for the 16 marginal
quantiles. Figure~\ref{fig:school} presents the same analysis based on
an MCMC sample of $n=100,000$.  Comparing Figures~\ref{fig:school.large.var}
and~\ref{fig:school} illustrates that accounting for the Monte Carlo
error can impact our statistical conclusions about the coaching
effects.  Consider the left endpoints of the credible regions in the
plots for $\theta_2$, $\theta_4$, and $\theta_8$.  Specifically, in
Figure~\ref{fig:school.large.var} the left endpoints are
indistinguishable from zero when we account for the Monte Carlo error,
but this is no longer an issue with Figure~\ref{fig:school}.

\begin{figure}[tbh]
\begin{center}
\includegraphics[width=4in,height=5in]{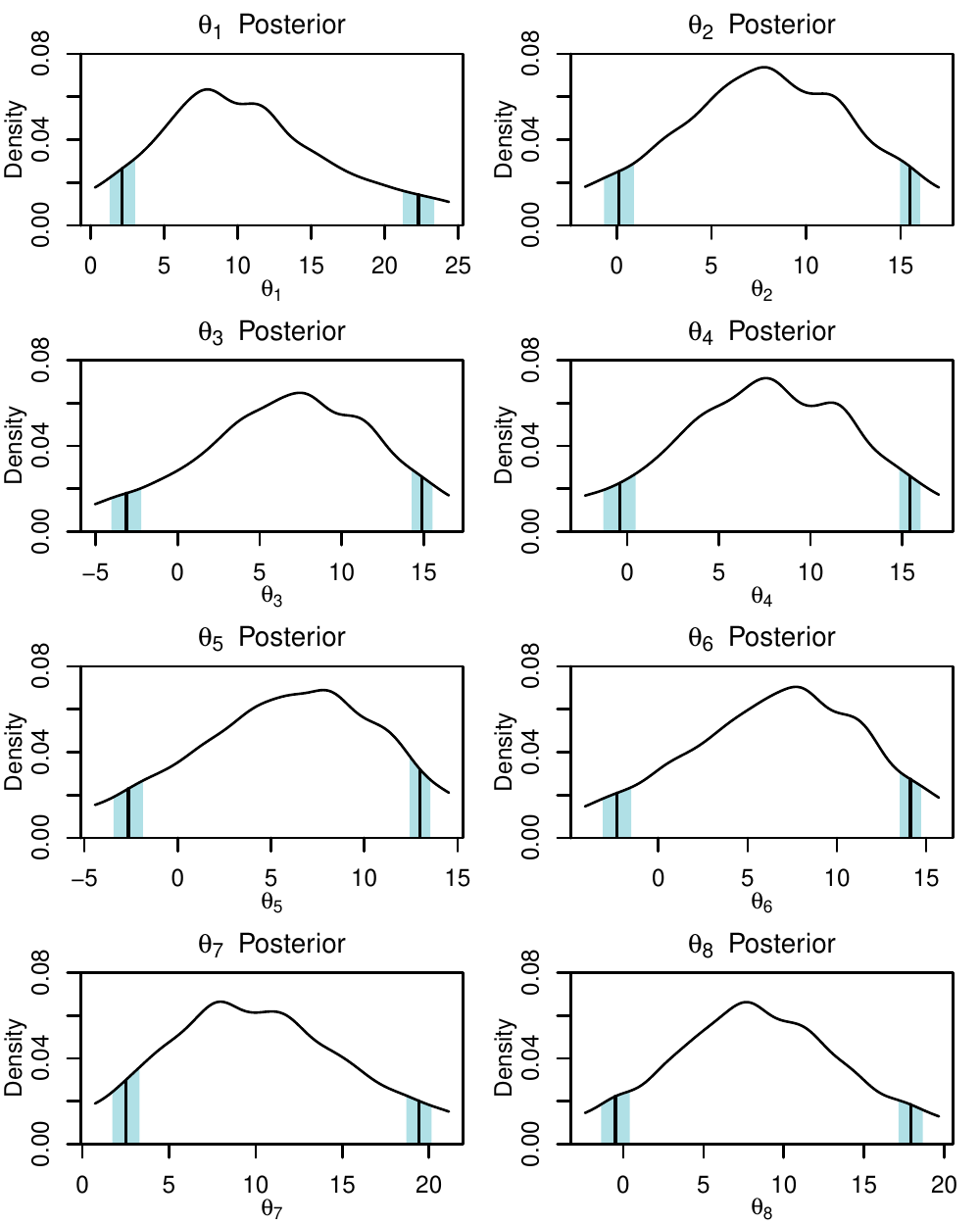}
\caption{Plot of the estimates of an $80\%$ credible interval for each $\theta$ with simultaneous $90\%$ confidence intervals for 10,000 samples.}
\label{fig:school.large.var}
\end{center}
\end{figure}

\begin{figure}[htb]
\begin{center}
\includegraphics[width=4in,height=5in]{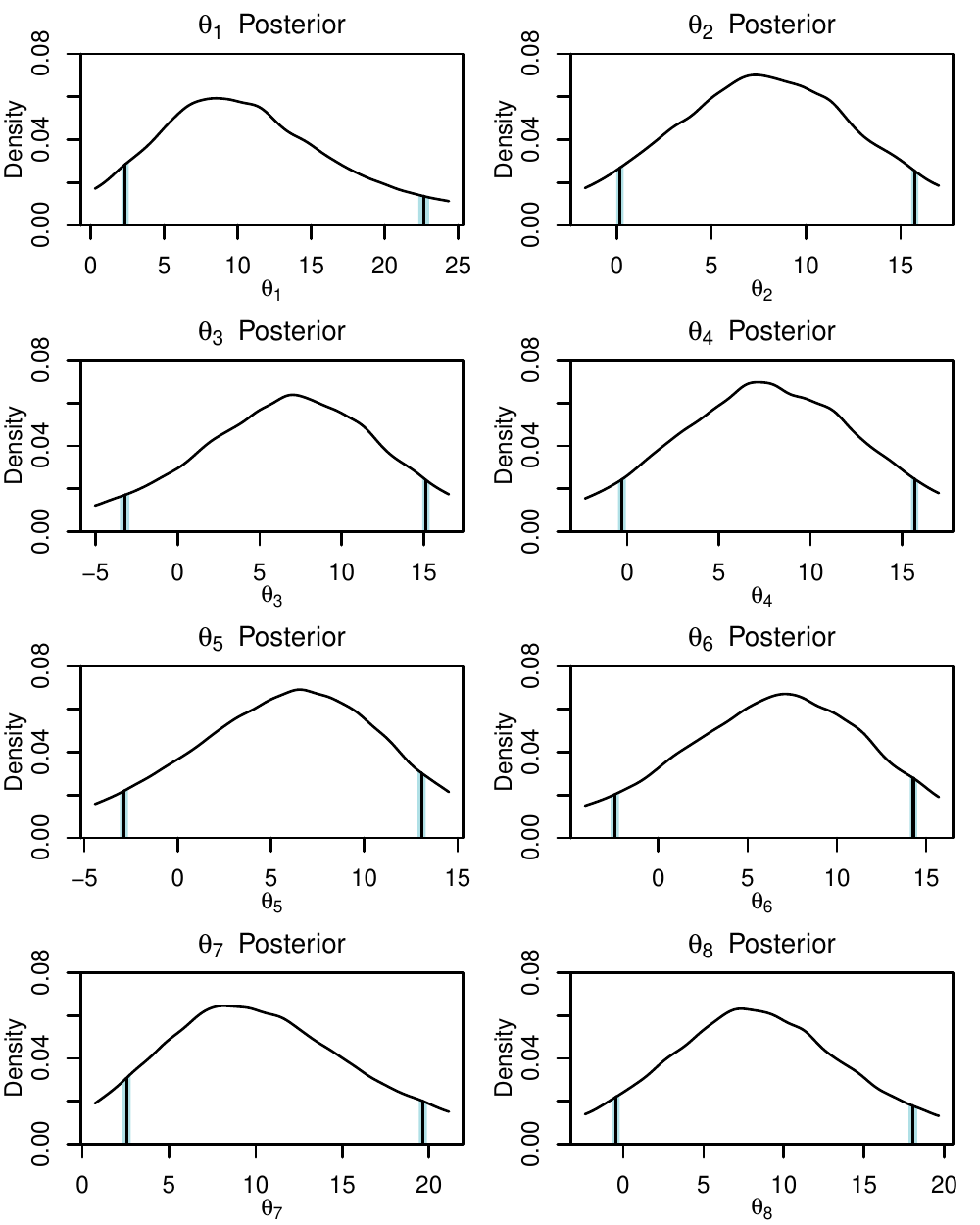}
\caption{Plot of the estimates of an $80\%$ credible interval for each $\theta$ with simultaneous $90\%$ confidence intervals for 100,000 samples.}
\label{fig:school}
\end{center}
\end{figure}

\section{Final Remarks} \label{sec:disc}

Simultaneous estimation of means and quantiles has received little
attention in the simulation literature, despite being common
practice. We provide an approach for assessing the quality of
estimation for Monte Carlo sampling with a goal of ensuring reliable
Monte Carlo experimental results.  The approach is based on a
multivariate central limit theorem for any finite combination of means
and quantiles assuming the underlying process is strongly mixing at
the appropriate rate.  We emphasize the IID and Markov chain sampling
cases.  However, the proof technique used indicates that the result
will hold more broadly since all that was required is a strong law for
sample means, a CLT for sample means, and a Bahadur quantile
representation.

Given the multivariate CLT, we provide a fast algorithm for
constructing hyperrectangular confidence regions having the desired
simultaneous coverage probability and a convenient marginal
interpretation.  The methods are then incorporated into standard ways
of visualizing the results of Monte Carlo experiments enabling the
practitioner to more easily assess the reliability of the results.

\section*{Acknowledgements}
The authors thank Vladimir Minin for helpful conversations about
assessing Monte Carlo error in Bayesian analyses.  We also thank the anonymous
associate editor and referees whose comments helped improve this paper.

\bigskip
\begin{center}
{\large\bf SUPPLEMENTARY MATERIAL}
\end{center}

\begin{description}
\item[PDF:] File \texttt{supp\_visual.pdf} contains a description of the Gibbs sampler used in Section~\ref{sec:bayes} and additional plots that were omitted to save space.
\item[R Code:] Contains \texttt{R} code that reproduces the simulations and plots along with a file \texttt{README.txt} with directions (.zip file).

\end{description}

\singlespacing
\bibliographystyle{apalike}
\bibliography{ref}

\end{document}